\renewcommand\normalsize{%
   \@setfontsize\normalsize\@xpt{14}%
   \abovedisplayskip 10\p@ \@plus2\p@ \@minus5\p@
   \abovedisplayshortskip \z@ \@plus3\p@
   \belowdisplayshortskip 6\p@ \@plus3\p@ \@minus3\p@
   \belowdisplayskip \abovedisplayskip
   \let\@listi\@listI}
\renewcommand\small{%
   \@setfontsize\small\@ixpt{12}%
   \abovedisplayskip 8.5\p@ \@plus3\p@ \@minus4\p@
   \abovedisplayshortskip \z@ \@plus2\p@
   \belowdisplayshortskip 4\p@ \@plus2\p@ \@minus2\p@
   \def\@listi{\leftmargin\leftmargini
               \topsep 4\p@ \@plus2\p@ \@minus2\p@
               \parsep 2\p@ \@plus\p@ \@minus\p@
               \itemsep \parsep}%
   \belowdisplayskip \abovedisplayskip
}
\renewcommand\footnotesize{%
   \@setfontsize\footnotesize\@viiipt{10}%
   \abovedisplayskip 6\p@ \@plus2\p@ \@minus4\p@
   \abovedisplayshortskip \z@ \@plus\p@
   \belowdisplayshortskip 3\p@ \@plus\p@ \@minus2\p@
   \def\@listi{\leftmargin\leftmargini
               \topsep 3\p@ \@plus\p@ \@minus\p@
               \parsep 2\p@ \@plus\p@ \@minus\p@
               \itemsep \parsep}%
   \belowdisplayskip \abovedisplayskip
}
\renewcommand\scriptsize{\@setfontsize\scriptsize\@viipt\@viiipt}
\renewcommand\tiny{\@setfontsize\tiny\@vpt\@vipt}
\renewcommand\large{\@setfontsize\large\@xipt{15}}
\renewcommand\Large{\@setfontsize\Large\@xiipt{16}}
\renewcommand\LARGE{\@setfontsize\LARGE\@xivpt{18}}
\renewcommand\huge{\@setfontsize\huge\@xxpt{30}}
\renewcommand\Huge{\@setfontsize\Huge{24}{36}}
\newcolumntype{H}{>{\setbox0=\hbox\bgroup}c<{\egroup}@{}}
\def\blfootnote{\xdef\@thefnmark{}\@footnotetext}
\title{Multimodal Network Alignment}
\author{Huda Nassar, David F.~Gleich}
\date{}
\begin{document}
\marginnote[600pt]{\fontsize{7}{9}\selectfont
Huda Nassar, Purdue University\\\url{hnassar@purdue.edu}\\\noindent David F. Gleich, Purdue University\\\url{dgleich@purdue.edu}}

\maketitle

\vspace{-3\baselineskip}

\begin{abstract} \small\baselineskip=9pt 
	A multimodal network encodes relationships between the same set of nodes in multiple settings, and network alignment is a powerful tool for transferring information and insight between a pair of networks. We propose a method for multimodal network alignment that computes a matrix which indicates the alignment, but produces the result as a low-rank factorization directly. We then propose new methods to compute approximate maximum weight matchings of low-rank matrices to produce an alignment. We evaluate our approach by applying it on synthetic networks and use it to de-anonymize a multimodal transportation network.
\end{abstract}

\section*{Keywords}
network alignment, low-rank matching, multimodal data

\section{Introduction \textit{\&} Motivation}

Network alignment is a technique to identify related nodes between two distinct networks. The result is a formal alignment or matching between the vertices of each network such that a vertex can only be aligned or matched to a single vertex in the opposite network. This methodology has its roots in the domains of ontology alignment~\cite{hu2008-falcon-ao}, protein-protein network analysis~\cite{kelley2004-pathblast,singh2008-isorank-multi}, social network de-anonymization~\cite{Korula-2014-reconciliation}, and object recognition~\cite{conte2004-graph-matching}. The results of network alignment are often used for hypothetical information transfer. 
For instance, if we understand how a particular experiment disrupts a subnetwork of protein interactions in a mouse and we know the related  interactions in a human, we have evidence for the hypothesis that the experiment would disrupt the mapped interactions in a human. 

The network alignment problem and methods have traditionally been formulated for a pair of networks. Recently, much of the network data emerging in scientific applications and engineering studies is multimodal and contains separate types of relational information between vertices~\cite{Szell-2010-multirelational,Cardillo-2013-multiplex,Battison-2014-multiplex,Nicosia-2015-multiplex,Ni-2014-ranking}. For instance, transportation networks are often described as multimodal when they feature multiple interconnecting but distinct networks, such as when subway lines reach airports and national train systems. In biology, multimodal networks show different types of relationships that can occur between proteins and genes such as protein sequence similarity, co-occurrence in genes, co-occurrence in scientific papers, experimental interaction and more. 

This motivates our present manuscript. We seek to produce effective methods to align a pair of multimodal networks. We will be more specific about our notion of a multimodal network in the following section. For now, consider multimodal networks in biology where proteins have many different types of relationships as described above. We may, for instance, be interested in aligning these networks generated from two species to infer related complexes and systems. In this case, the modes of each multimodal network are shared between the two networks to align. 


We propose a new method that takes in a pair of multimodal networks with the same set of modes and produces an alignment between the vertices encoded in the network with no apriori knowledge of the relationship between the vertices. Network alignment problems are notoriously difficult from a theoretical perspective as they  generalize the subgraph isomorphism problem. In this context, principled heuristic solutions abound~\cite{Mohammadi-2016-tame,singh2008-isorank-multi,conte2004-graph-matching}. Our method is a principled and effective heuristic that generalizes the ideas utilized in the IsoRank method~\cite{singh2008-isorank-multi} in concert with the insights from the network similarity decomposition~\cite{Kollias-2011-netalign}. It is designed for multimodal networks with 10s-1000s of modes with 1,000-100,000 vertices.

\begin{marginfigure}
	\centering
	\includegraphics[scale=0.8]{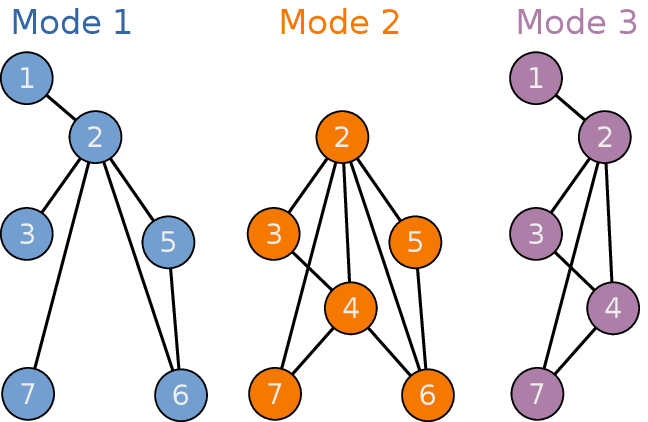}\bigskip
	
	\includegraphics[scale=0.8]{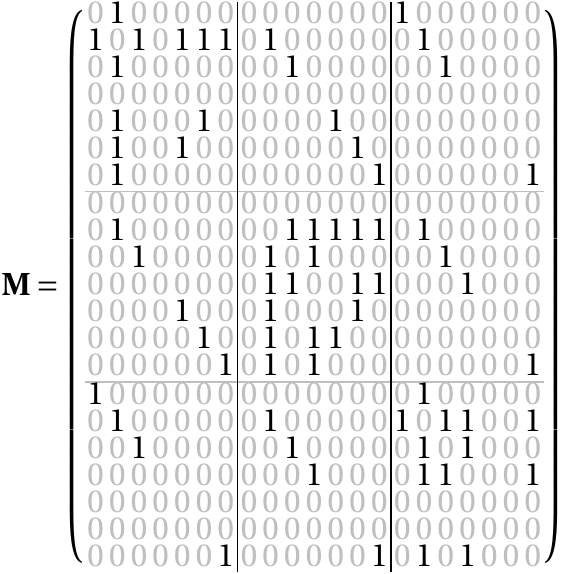}
	\caption{A multimodal network consists of a common set of vertices identifiers (in this case, 1-7) and a set of modes that define edges. Below, we illustrate the multimodal adjacency matrix, which we will use to align the multimodal networks.}
	\label{fig:multimodal}
\end{marginfigure}

\begin{marginfigure}
	\centering
	\includegraphics[scale=0.76]{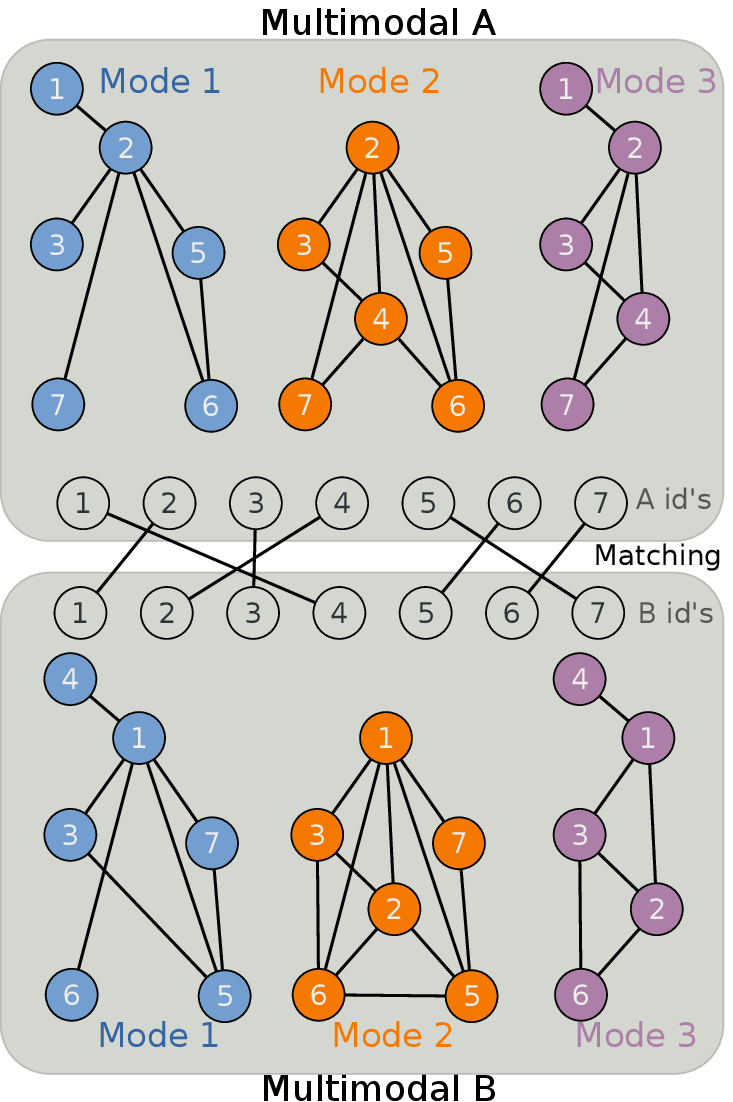}
	\caption{An illustration of the multimodal alignment problem. The goal is to identify the matching illustrated between the node id's given only the multimodal networks $A$ and $B$. We assume the modes have a known alignment. The overlap of this matching is 20. }
	\label{fig:multimodal-alignment}
\end{marginfigure}

On top of this new method, we contribute a new theory about how to take the output from our heuristics and efficiently turn it into a matching between the graphs. This solves one of the key challenges in how to deal with multimodal networks and alignment. Many good ideas result in a polynomial explosion in problem size. This causes methods that have been developed for the pairwise case to be nearly impossible to use for the multimodal setup we propose~\cite{Bayati-2013-netalign,klau2009-network-alignment}. More specifically, this is a memory bottleneck: they would need terabytes of memory to handle problems that start off as a megabyte of data.  We state this new theory as an independently useful primitive of finding a matching in a low-rank matrix. For this problem, we investigate a highly efficient $1/k$ approximation for a rank-$k$ matrix that finds the matching by using the low rank factors of the matrix \textit{only}.

One of the difficulties with network alignment is that effective solutions for applications often exploit and use features of the data arising in those applications, which may not be generalizable to other instances. For instance, in biology, using information about the relationship between the protein sequences is vital to attain the best solutions~\cite{singh2008-isorank-multi,Mohammadi-2016-tame}. Thus, when we seek to evaluate our new methods, we will do so in (i) detailed synthetic experiments and (ii) a case-study and demonstration of our technique where we de-anonymize a publicly released transportation network that lacks meaningful vertex identifiers~\cite{Nicosia-2015-multiplex}. The goal of the synthetic experiments is to address the hypothesis that using multimodal alignment is better than straightforward generalizations of using existing network alignment methods to  pairwise align each mode or a combined network. In the case of the transportation data, only our multimodal alignment method is able to completely map the data between the public data and the original database with labels. Our goal with both of these experiments is to reveal properties of our method and ideas that would be useful in any domain-specific application of network alignment but without the engineering that tends to occur in these applications. Towards that end, we do no tuning or parameter selection.

In short, the contributions of our manuscript are:
\begin{compactenum}
	\item A precise statement of a multimodal network alignment problem (Section~\ref{sec:problem}).
	\item A new method, multimodal similarity decomposition (MSD), to solve multimodal network alignment problems that uses all the information among the modes  (Section~\ref{sec:msd}) and goes beyond baseline methods that align the modes individually (Section~\ref{sec:baseline}).
	\item Independently useful results about how to approximate a bipartite max-weight matching where the matching matrix is low-rank (Section~\ref{sec:matching}). 
	\item We find that simple methods for multimodal alignment outperforms strong methods for pairwise network alignment of the individual modes when there are many vertices missing in each mode (Section~\ref{sec:synthetic}). 
	\item A case-study with de-anonymizing a publicly released multimodal network of airports and airlines~\cite{Nicosia-2015-multiplex} that cannot be de-anonymized using state-of-the-art network alignment methods. We also study the set of most-helpful modes to find this alignment (Section~\ref{sec:airports}). 
\end{compactenum}
We also review existing network alignment methods and ideas to contextualize our contribution (Section~\ref{sec:background}). We will post our codes at \url{www.cs.purdue.edu/~dgleich/codes/multinsd/} for reproducibility and additional sensitivity experiments.


\section{Multimodal networks \textit{\&} Multimodal network alignment}
\label{sec:problem}
\label{multi-creation}
In the context of this paper, a multimodal network is a common set of vertex labels (the set $V$) and multiple sets of undirected edges over these vertices. Each set of edges represents a mode, so we have  $E^{(1)}, \ldots, E^{(m)}$ for $m$ modes.
We illustrate an example in Figure~\ref{fig:multimodal} with three modes. Note that only a subset of vertices may be present in each mode. 

To manipulate multimodal network data computationally, we use a multimodal adjacency matrix. Our definition is in the spirit of how multislice, multiplex networks, and temporal networks are represented~\cite{Mucha-2010-community}, although the details of our specific construction differ. Let $\mA_1, \ldots, \mA_m$ represent the adjacency matrices of each mode. Then the multimodal adjacency matrix is: 
\begin{equation}
 \mM = \bmat{ \mA_1^{} & \mC_{12}^{}  & \ldots & \mC_{1m}^{} \\
	            \mC_{12}^T & \mA_2^{}  & \ldots & \mC_{2m}^{} \\
	            \vdots  & \ddots & \ddots & \vdots \\
	            \mC_{1m}^T & \ldots & \ldots & \mA_m }. 
\label{eqn:multimodalmatrix}
\end{equation}
Here, the matrix $\mC_{ij}$ represents the cross-modal associations between mode $i$ and mode $j$. This is always a binary diagonal matrix where $\mC_{ij}(k,k) = 1$ if vertex $k$ is present in both mode $i$ and mode $j$. All other entries are 0. An example is illustrated in Figure~\ref{fig:multimodal}.  Note that $\mM$ is sparse, and the cross-modal edges do not introduce too many new entries beyond the data. 

The goal of the multimodal network alignment problem is to produce a matching between the vertices of two multimodal networks that \emph{aligns} the networks by maximizing the number of edges of each network that are preserved under the matching. Recall that a matching is a $1-1$ relationship between two sets. Let $V_A$ and $V_B$ be the vertex sets of multimodal networks $A$ and $B$, respectively. We use $v \leftrightarrow v'$ to denote a matching between $v \in V_A$ and $v' \in V_B$. Let $E_A^{(k)}$ and $E_B^{(k)}$ be the edge sets of the $k$th mode of $A$ and $B$ as well. Then we seek a matching  between $V_A$ and $V_B$ that maximizes
\begin{equation}
\sum_{~~k=1~~}^{m} \sum_{\substack{(u_A,v_A) \\ \in E_A^{(k)}}} \sum_{\substack{(u_B,v_B) \\ \in E_B^{(k)}}} \begin{cases} 1 & \substack{\text{if } u_A \leftrightarrow u_B \text{ and } \\ v_A \leftrightarrow v_B} \\ 0 & \substack{\text{otherwise}}. \end{cases}
\end{equation} 
We call this objective the \emph{overlap} of a matching.  
Note that this is a single matching between the vertices that is then evaluated over all the modes, which is possible because we assume that the correspondence between each mode is known and mode $k$ in network A corresponds to mode $k$ in network B. See Figure~\ref{fig:multimodal-alignment} for an example. 

In many cases it is convenient to state the matching as a matrix $\mX$ where the rows are indexed by vertices in $V_A$ and columns by vertices in $V_B$. Let $X(v,v') = 1$ if $v \in V_A$ matches to $v' \in V_B$ and 0 otherwise. Also, let $\{ \mA_k \}$ and $\{ \mB_k \}$ be the adjacency matrices for each mode of the multimodal networks $A$ and $B$. Then the multimodal network alignment problem is
\begin{equation}\label{eq:mm-align}
\MAXthree{\mX}{\tfrac{1}{2}\sum_{k=1}^{m} \sum_{ij} [\mX^T \mA_k \mX]_{ij} [\mB_k]_{ij}}{\sum_{i} X_{ij} \le 1 \text{ for all $j$}}{ \sum_{j} X_{ij} \le 1 \text{ for all $i$}}{X_{ij} \in \{ 0, 1\}.}
\end{equation}
This is an integer optimization problem over the space of matrices that is extremely challenging to solve exactly. It generalizes the subgraph isomorphism problem and all existing network alignment instances, which correspond to the case where $m=1$.

\section{Related work \textit{\&} Background}
\label{sec:background}

This section is meant to position our work within the broader context of network alignment methods. For recent surveys that explore these dimensions, see~\cite{Elmsallati-2016-network-alignment}. We are considering global network alignment which seeks a single matching that applies to the entire network instead of multiple local alignments between the networks. The global network alignment problem is almost always stated as an attempt to maximize the number of overlapped edges combined with maximizing domain-specific notions of apriori known vertex similarity. For instance, in protein-protein interaction networks, this notion of similarity is the similarity score of the respective genetic sequences~\cite{singh2008-isorank-multi}. In ontology alignment, this may come from a textual measure on the ontology label. The hardest network alignment problems lack these \emph{hints} about how to align the networks. Our method falls into this hard case where we assume no apriori knowledge of the alignment. Another important feature is scoring the resulting match when the networks have substantially different sizes and edge densities. This aspect has been explored most thoroughly in terms of biological networks~\cite{Vijayan-2015-magna++}. 

There are two prominent classes of methods: (i) embedding and (ii) integer optimization relaxations and heuristics. Embedding methods seek to compute a feature vector for each node of $A$ and $B$ independently. They then use relationships between the feature vectors to generate the matching. Using eigenvectors is common for this task in pattern recognition~\cite{knossow2009-laplacian}. Recent methods have proposed graphlet counts~\cite{Kuchaiev-2010-topological} and eigenvector histograms~\cite{Patro-2012-ghost}. One closely related method to our multimodal proposal is to generate the embeddings based on node and edge types~\cite{fraikin2007-graph-matching}. Obtaining these feature vectors can be extremely expensive in terms of computation, but results in memory efficient methods. 

Our method is most strongly related to the integer optimization and relaxation framework that seeks to maximize functions related to~\eqref{eq:mm-align} in the case of one mode, perhaps with an additional term reflecting the vertex similarity. This problem is a specific instance of a quadratic assignment problem~\cite{Burkard-2012-assignment-problems}. IsoRank was one of the first methods in this class~\cite{singh2008-isorank-multi}.  Subsequent techniques include belief propagation methods~\cite{Bayati-2013-netalign}, Lagrangian relaxations~\cite{klau2009-network-alignment}, spectral methods~\cite{Feizi-2016-spectral-net-align}, and tensor eigenvectors for motif-alignment~\cite{Mohammadi-2016-tame}. Essentially, all of these methods store a dense, real-valued heuristic matrix $\mY$ of size $|V_A| \times |V_B|$. This needs memory that is \emph{quadratic} in the size of the networks and greatly limits scalability. One of the most scalable methods results from a crucial analytical insight into the structure of the IsoRank heuristic. The resulting method -- network similarity decomposition~\cite{Mohammadi-2016-tame} -- can be considered a hybrid of embedding and integer optimization. Our results show how these ideas enable a seamless generalization to multimodal networks and we return to explain in more depth in the next section.


\section{Multimodal Similarity Decomposition: A~multi-modal~generalization~of~IsoRank~and~the~Network~Similarity~Decomposition}
\label{sec:msd}
We now present our approach to compute a multimodal network alignment. The high level idea is that we are going to run the IsoRank method to align the multimodal adjacency matrices $\mM$ and $\mN$ corresponding to multimodal networks $A$ and $B$. Doing so will involve a number of new insights about how the methods will behave on multimodal networks and exploiting the structure of the methods. The result of this section is a heuristic solution matrix $\YMN$ that we describe how to turn into a multimodal alignment in Section~\ref{sec:matching}. We begin by reviewing IsoRank~\cite{singh2008-isorank-multi} and the network similarity decomposition~\cite{Kollias-2011-netalign}. We assume the networks are undirected in this derivations, but note that directed generalizations are possible.

\subsection{IsoRank \textit{\&} the network similarity decomposition}

Consider aligning two standard networks $A$ and $B$ with adjacency matrices $\mA$ and $\mB$. The IsoRank method generates a heuristic matrix $\mY$ where $Y(v,v')$ is large if $v$ seems like it should match with $v'$. To do this, IsoRank computes $\mY$ as the solution of a PageRank problem on the product graph of $A$ and $B$. In a small surprise, this is a well-motivated idea. For more details about this, we refer the reader to~\cite{Bayati-2013-netalign}.  
We can express this PageRank problem in terms of the final matrix $\mY$ where $\mP$ is the degree-normalized matrix for network $A$, $P_{ij} = A_{ij}/d_j$, $\mQ$ is the degree-normalized matrix for $\mB$, and $S(v,v')$ is the apriori similarity of node $v$ in $A$ and $v'$ in $B$: 
\begin{equation}
	\label{eq:isorank}
\mY = \alpha \mP \mY \mQ^T 	+ (1-\alpha) \mS
\end{equation}
One small note: we assume in these equations that $A$ and $B$ have no nodes with zero outdegree. If there are such nodes, then we need to normalize $\mY$ to be a probability distribution.
When there is no apriori similarity known, then all entries $\mS$ are the constant $|V_A|^{-1} |V_B|^{-1}$, which reflects uniform similarity.

The insight in the network similarity decomposition (NSD) is that if $\mS$ is rank 1 -- as it would be in the case of uniform similarity -- then running $t$ iterations of the power method to compute PageRank will result in a rank $t+1$ matrix $\mY$. Moreover, the low-rank factorization of $\mY$ can be easily computed by running the power method for PageRank on network $A$ and network $B$ \emph{independently}. (In this sense, this is similar to an embedding method, because we compute the PageRank iterates of each network separately, then combine them.) This is easy to see when $\mS = \vu \vv^T$ then $t$ iterations of the power-method for \eqref{eq:isorank} produce: 
\[ \mY^{(t)} = (1-\alpha) \sum_{k=0}^{t-1} \alpha^k \mP^k \vu \vv^T (\mQ^T)^k + \alpha^t \mP^t \vu \vv^T (\mQ^T)^t\]
(This matrix is a sum of $t+1$ rank-1 matrices of the form $\texttt{scalar} \cdot (\mP^t \vu) (\mQ^t \vv)^T$.)
In this way, a low-rank factorization of the IsoRank solution can be computed extremely efficiently. Run $t$ steps of the power method for PageRank on network $A$ and $B$ separately and store the iterations (we will be more precise when we explain our method for multimodal networks next). Reasonable values of $t$ are between $5$ and $25$ as the PageRank series converges extremely quickly; for $\alpha$ we use $\alpha = 0.9$. 

\subsection{Our Multimodal Similarity Decomposition}
\label{sec:msd-detail} 
Our goals with the multimodal similarity decomposition and multimodal alignment closely mirror those of NSD and IsoRank. In our case, we want to allow alignment information to \emph{flow} between modes of the network to reinforce alignments between vertices that are present in various modes. This intuition suggests that the IsoRank scores for the alignment of multimodal adjacency matrices will be an effective heuristic for our problem. This will flow alignment information between modes via the cross-modal edges $\mC_{ij}$. 

Since we assume that the alignment between modes of the network are already known, we can strengthen our heuristic by customizing the matrix $\mS$. Recall that $S_{ij}$ is the apriori similarity of node $i$ in $A$ and $j$ in $B$. The multimodal adjacency matrix has $m$ copies of each vertex identifier -- one for each vertex in each of the $m$ modes. Thus, we use the matrix 
\begin{equation} \label{eq:mm-sim}
\mS = \bmat{ \gamma \cdot \texttt{ones} & 0 & \ldots & 0 \\
	            \vdots & \ddots & \ddots & \vdots \\
	            0 & \ldots & 0 & \gamma \cdot \texttt{ones} }, 
\end{equation}	        
where \texttt{ones} is a matrix of size $|V_A| \times |V_B|$ and $\gamma$ is chosen such that all entries of $\mS$ sum to 1 (that is, $\gamma = m^{-1} |V_A|^{-1} |V_B|^{-1}$). This choice of $\mS$ corresponds to a rank-$m$ matrix, where each mode constitutes a single rank-1 factor suggesting that nodes in mode 1 in multimodal network A should correspond to nodes in mode 1 of multimodal network B. 

Let $\mM$ and $\mN$ be the multimodal adjacency matrices for multimodal networks $A$ and $B$, respectively. Let $\PM$ and $\QN$ correspond to degree-normalization (by columns) of $\mM$ and $\mN$.  Thus, the heuristic $\YMN$ we compute solves: 
\begin{equation} \label{eq:mm-iso}
\YMN^{} = \alpha \PM^{} \YMN^{} \QN^T + (1-\alpha) \mS
\end{equation}  
where $\mS$ is the multimodal similarity \eqref{eq:mm-sim}. The result $\YMN$ gives a score for aligning each node in each mode of $A$ to each node in each mode $B$. This is not the type of output we want, so in subsequent sections, we discuss how to turn this output into the type of matching we expect. 

At this point, we can utilize the same methodology underlying the network similarity decomposition to efficiently compute a low-rank decomposition of $\YMN$. Note that the matrix $\mS$ is rank-$m$. Thus, we can compute $t$ iterations of the power-method to solve \eqref{eq:mm-iso} and the result is a rank $(m) \cdot (t+1)$ matrix $\YMN$ following the exact same argument as expression for the network similarity decomposition.  The pseudocode is shown in Figure~\ref{fig:pseudocode}. One small detail is that we assumed in the derivation that $\PM$ and $\QN$ have no empty rows or columns. If this is not the case, then each iterate needs to be normalized to be a probability distribution as in the pseudocode. 
\begin{figure}
\begin{pseudocode}
Input: Multimodal networks $A$, $B$, $\alpha$, iteration $t$
Output: Matrices $\mU$ and $\mV$ such that $\smash{\YMN^{(t)} = \mU \mV^T}$
$\mU = $ PageRankPowers(A,$\alpha$,$t$)
$\mV = $ PageRankPowers(B,$\alpha$,$t$)

PageRankPowers(A,$\alpha$,$t$)
Set $\mM$ to be the multimodal adjacency of $A$.
Set $\mP$ to be the column normalized matrix $\mM$. 
Allocate Z as a $|V| \times (t+1) \times m$ array 
for k=1 to m (the number of modes)
  for all i=1 to $|V|m$
    Z[i,0,k] = $\textstyle\begin{cases} (\sqrt{m}|V|)^{-1}  & i \text{ is an index in mode } k \\ 0 & \text{otherwise} \end{cases}$
  for j=1:t
	Z[:,j,k] = normalize($\mP$ Z[:,j-1,k])
	Z[:,j-1,k] = $\sqrt{(1-\alpha)\alpha^{j-1}}$ Z[:,j-1,k]
  end
  Z[:,t,k] = $\sqrt{\alpha^t}$ Z[:,t,k]
end  
return $\mU$ as Z reshaped to $|V| \times (t+1)(m)$
\end{pseudocode}
	\vspace{-1em}
\caption{Pseudocode for the multimodal similarity decomposition: $m$ is the number of modes of the multimodal networks $A$ and $B$, and the column-normalization computes: $P_{ij} = M_{ij} / \sum_{\ell} M_{\ell,j}$. The \texttt{normalize} function divides a vector by its sum. The reshaping occurs by column such that $U(:,1) = Z(:,0,1)$. We assume the reshaping happens the same way on $\mU$ and $\mV$ to ensure the low-rank factors are aligned. The result $\mU$ and $\mV$ must still be matched as in Section~\ref{sec:matching} to solve~\eqref{eq:mm-align}.}
\label{fig:pseudocode} 
\end{figure}

\section{Matching algorithms \textit{\&} Resolving conflicts}
\label{sec:matching}
\label{matching-algs}

After the multimodal network decomposition (MSD), the next step is to produce a matching between the vertices. A standard technique here is to run a max-weight bipartite matching on the matrix $\mY$ for the network alignment case~\cite{singh2008-isorank-multi,Bayati-2013-netalign}. There are two issues with this step. The first is the size of the resulting matrix $\YMN$ can be prohibitively large if we realize it as a dense matrix. This is especially true if we have a large multimodal network with many modes. A problem with $100$ modes and $5000$ vertices would produce a $500,000\times 500,000$ dense matrix whereas the low-rank factors for $8$ iterations would require two $500,000 \times 900$ matrices. Thus, we investigate memory-efficient matching routines that deal with $\YMN$ via its low-rank decomposition. The second issue is that this output is a matching on the rows of the multimodal adjacency matrix and not a matching on the vertex set of the multimodal network. Aggregating them can result in conflicting matches and we describe a few ways to resolve these. 

\subsection{Approximate matching via low-rank factors}
Bipartite matching is a common subroutine that has a well-known polynomial time algorithm. The problem with this algorithm is that it requires the matrix of weights associated with each edge of the bipartite graph. For our case, this matrix is large and dense, rendering the algorithm infeasible due to the memory required. However, note that the bottleneck is the density of the matrix. Bipartite matching is computationally tractable with \emph{sparse} matrices of exactly the same size because the algorithms utilize the sparsity to lower the memory requirement. For the purposes of this section, let $\mY = \mU\mV^T$ be the matrix we wish to run a bipartite matching on and $\mU$ be $m \times k$. We note that the low-rank factors from MSD are non-negative, so we assume $\mU, \mV \ge 0$. 

We begin with an extremely idealized case. If $k=1$, then $\mY = \vu \vv^T$ is rank-1, and it is extremely easy to compute a maximum weight matching. This only involves \emph{sorting} the vectors $\vu$ and $\vv$ in decreasing order and taking the largest feasible set of components. (This fact is a simple corollary of the rearrangement inequality, so we omit a formal statement and proof.) 

For the remainder of this derivation, we'll need one additional bit of notation: the matrix inner product $\mA \bullet \mB = \sum_{ij} A_{ij} B_{ij}$. Recall that a matching can be expressed as a matrix where $X_{ij} = 1$ if $i$ is matched to $j$ and all other entries are 0. Given any matching $\mX$ the weight of the matching is $\mX \bullet \mY$. We now show that using the \emph{best matching} from each rank-1 factor of $\mY$ gives a $1/k$-approximation when $\mY$ has rank-$k$. 

\begin{theorem}
Let $\mY = \mU \mV^T$ be a rank-$k$ non-negative decomposition for $\mY$. Let $\mX_i$ be the maximum weight matching corresponding to the rank-1 factor $\vu_i^{} \vv_i^T$, and let $f_i = \mX_i \bullet (\vu_i\vv_i^T)$ be its weight. If $f^*$ is the largest value of $f_i$, then $f^*$ is a $1/k$-approximation to the maximum weight matching on $\mY$.  
\end{theorem}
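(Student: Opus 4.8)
The plan is to exploit the bilinearity of the matrix inner product over the rank-1 factors, combined with the fact that each $\mX_i$ is exactly optimal on its own factor. First I would write the decomposition explicitly as $\mY = \mU\mV^T = \sum_{i=1}^{k} \vu_i^{} \vv_i^T$, so that for any matching $\mX$ we have $\mX \bullet \mY = \sum_{i=1}^{k} \mX \bullet (\vu_i^{}\vv_i^T)$, since $\mA \bullet \mB$ is linear in each argument.

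Next, let $\mX^*$ denote an optimal (maximum-weight) matching on $\mY$, and set $\text{OPT} = \mX^* \bullet \mY$. Applying the decomposition to $\mX^*$ gives $\text{OPT} = \sum_{i=1}^{k} \mX^* \bullet (\vu_i^{}\vv_i^T)$. The key step is to bound each summand individually: because $\mX^*$ is a feasible matching and $\mX_i$ is, by construction, the maximum-weight matching for the single factor $\vu_i^{}\vv_i^T$, we have $\mX^* \bullet (\vu_i^{}\vv_i^T) \le \mX_i \bullet (\vu_i^{}\vv_i^T) = f_i$. This is exactly where the rank-1 structure is used: on each factor the optimum is known explicitly via the sorting/rearrangement fact noted just above the theorem, so $\mX^*$ can do no better on that factor than the factor's own optimum $f_i$.

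Chaining these bounds then finishes the argument: $\text{OPT} = \sum_{i=1}^{k} \mX^* \bullet (\vu_i^{}\vv_i^T) \le \sum_{i=1}^{k} f_i \le k\, f^*$, where the last inequality uses $f_i \le f^*$ for every $i$ by the definition of $f^*$. Rearranging gives $f^* \ge \text{OPT}/k$, the claimed $1/k$-approximation. The non-negativity assumption $\mU, \mV \ge 0$ enters only to ensure that each factor $\vu_i^{}\vv_i^T$ is entrywise non-negative, which keeps the quantities $f_i$, $f^*$, and $\text{OPT}$ non-negative and hence makes the approximation ratio meaningful.

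I do not anticipate a serious obstacle: the proof is essentially a one-line telescoping of the optimum across the factors. The only point that genuinely needs care is the per-factor inequality $\mX^* \bullet (\vu_i^{}\vv_i^T) \le f_i$, which must be justified by emphasizing that $\mX^*$ is an \emph{admissible} matching for each individual factor (even though it is tuned to the full sum $\mY$), while $f_i$ is the \emph{maximum} over all matchings for that factor. Everything else is linearity and the definition of $f^*$.
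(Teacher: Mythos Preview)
Your proposal is correct and follows essentially the same approach as the paper: decompose $\mX^* \bullet \mY$ linearly over the rank-1 factors, bound each term by the factor-optimal $f_i$, and then by $k f^*$. The paper compresses exactly this chain into a single displayed line.
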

\begin{proof} 
Let $\mX^*$ be any optimal maximum weight matching. Then 
\[\underbrace{\vphantom{\textstyle\sum_{k=1}^n}\mX^* \bullet \mY}_{\substack{\text{optimal matching}\\ \text{ weight}}} = \underbrace{\textstyle\sum_{i=1}^k (\mX^* \bullet \vu_i \vv_i^T) \le \textstyle\sum_{i=1}^k f_i}_{\text{$f_i$ is optimal for $\vu_i \vv_i^T$ }} \le k f^*. \]
\end{proof} 

While this provides a simple approximation bound, the value of $k$ for our experiments is often around $1000$, rendering it rather pointless in terms of theory. However, we find that a single factor often produces exceptionally good approximation factors -- far beyond what the theory from this result would show. We plan to continue investigating this problem and conjecture that the structure of the MSD vectors themselves may suggest a tighter approximation.  

In practice, there are a few additional improvements we can make. Two of these apply to general problems and also produce a $1/k$ approximation. The third improvement is specific to the multimodal network alignment objective. We also discuss a few other ideas suggested in the literature. 

\paragraph{Maximum weight 1/k.} In the first variant, we compute the weight of the matching $\mX_i$ in the full matrix $\mY$ via $\mY \bullet \mX_i$. The runtime of this computation is $nk$ and it can be done entirely with the low-rank factors. Then we select the largest weight among matchings $\mX_i$. This is still a $1/k$ approximation because $\mY \bullet \mX_i \ge (\vu_i \vv_i^T) \bullet \mX_i$ due to the non-negativity.

\paragraph{Union of matchings.} While we cannot always expect to solve the dense bipartite matching problem, we are able to solve sparse bipartite matching problems on the size of the vertex sets easily. In this case, we can take the union of edges produced by the matchings $\mX_1, \ldots, \mX_k$ and find the best bipartite matching in this sparse subset of the entries of $\mY$. Again, this can only make our approximation better and so we still get $1/k$.

\paragraph{Maximum overlap.} The actual goal of our problem is to achieve the best alignment between two multimodal networks. Thus, rather than compute the \emph{weight} of the matching in the heuristic $\mY$, we directly evaluate the \emph{overlap} produced by each of the matchings $\mX_i$ between the multimodal adjacency matrices. (Note this is slightly different from the matching we will consider after we resolve conflicts below.) This picks the matching $\mX_i$ that  corresponds with the true objective functions. 

\paragraph{Other possibilities.} In ref.~\cite{Kollias-2014-parallel-nsd}, they suggest computing $\mY$ a row at a time and retaining the largest $r$ entries as edges of a graph. This forms a sparse graph that can be used in a parallel maximum weight bipartite matching.  Alternatively, it is feasible to run a greedy $1/2$-approximate matching. We found both of these underperformed our ideas in terms of final quality and were no faster. 

\paragraph{Runtime \textit{\&} Memory} All of these schemes keep the matrix $\mY$ stored as factors $\mU$ and $\mV$. Each matching takes $O(n)$ storage, so storing $k$ matches requires $O(nk)$ memory, and this exactly mirrors the storage of the factors $\mU$ and $\mV$ themselves. We provide a comparison of runtime in Table~\ref{tab:runtime}. 
\begin{margintable}
	\vspace*{-0.5\baselineskip}
	\caption{We compare runtimes of approximate bipartite matching where $\mU$ and $\mV$ are $n \times k$ for simplicity; all methods use $O(nk)$ memory. Also, $\text{BM}(E)$ is the runtime for bipartite matching on $n$ nodes with $E$ total entries, and $|E|$ is the number of edges in the original networks. } 
	\label{tab:runtime}
	\noindent\begin{tabularx}{\linewidth}{@{}lX@{}}	
		\toprule
		Method & Runtime  \\
		\midrule
		Simple $1/k$ & $nk \log n$ \\
		Max Weight $1/k$ & $nk \log n + nk^2$ \\
		Union $1/k$ & $nk \log n + nk^2 + \text{BM}(nk)$ \\		
		Max Overlap & $nk \log n + |E|$ \\
		$r$-Sparsified & $n^2k + \text{BM}(nr)$ \\
		Greedy $1/2$ & $n^3k$ \\ 
		\bottomrule
	\end{tabularx}
	\vspace{-\baselineskip}
\end{margintable}

\paragraph{Practical considerations.} All of the approximations here are extremely fast to compute for values of $n$ up to a few million. So in our codes, we will usually compute multiple matchings and choose the best based on the overall alignment scores after we resolve conflicts as discussed next. 

	\vspace*{-0.5\baselineskip}
\subsection{Resolving conflicts}
So far, we have presented methods to derive a matching on a low-rank approximation. In the context of multimodal networks, this matches at the level of the row and column identifiers of the multimodal adjacency matrices. This is a problem because each vertex id in the multimodal network occurs at $m$ different rows of the multimodal adjacency matrix. For instance, if multimodal network $A$ consists of $3$ modes, vertex $1$ can be matched to three different vertices, one for each mode!  

To resolve these conflicts, we have two methods and choose the best among them based on the overlap of the alignment they produce. The first method is to greedily examine the multimodal matching, project each match to nodes in $V_A$ and $V_B$, and accept it if it's still feasible in the projected alignment. The second method is to project the multimodal matchings to one bipartite network on $V_A$ and $V_B$. Put another way, we take the union of all matches induced by the single multimodal matching. Each edge is weighted by the value in $\YMN$ (and duplicates are summed) that arose from the matching.  Once we project all the matchings, we use a bipartite matching to obtain the final matching.
\section{Baseline methods}
\label{sec:baseline}

Before we describe our experiments, we explain how the multimodal alignment problem can be addressed using existing methods for network alignment. These methods do not take advantage of the multimodality of the data to guide the alignment, but nevertheless provide strong baselines. The first baseline is to smash the networks together and look at an alignment of the unimodal networks with edge set $\cup_{k=1}^m E_A^{(k)}$ and $\cup_{k=1}^m E_B^{(k)}$. We can use any existing network alignment tool for this task. In the following experiments, we use Klau's relaxation~\cite{klau2009-network-alignment} because it had the highest performance among a number of methods we evaluated (see online codes). In addition to aligning the smashed unimodal networks, we can also compute a pairwise alignment between each mode. Thus, we align the edges $E_A^{(k)}$ to $E_B^{(k)}$, which provides another matching. Note that each alignment produces exactly the type of output we need and there is no need to resolve conflicts as we saw in the multimodal case. We can then pick the \emph{best} among these $m+1$ alignments (1 alignment for each mode and 1 alignment for the smashed network) based on the total overlap they induce in the full multimodal network.  We call these methods \emph{pairwise alignment} rather than \emph{multimodal alignment}.

\section{Synthetic experiments}
\label{sec:synthetic}

\begin{fullwidthfigure}[t]
	\parbox{0.33\linewidth}{\centering ~~~~~~~~\textbf{MSD}}%
	\parbox{0.33\linewidth}{\centering ~~~~~~~~\textbf{Best of Pairwise}}%
	\parbox{0.33\linewidth}{\centering ~~~~~~~~\textbf{Difference}}
	
	\includegraphics[width=0.33\linewidth]{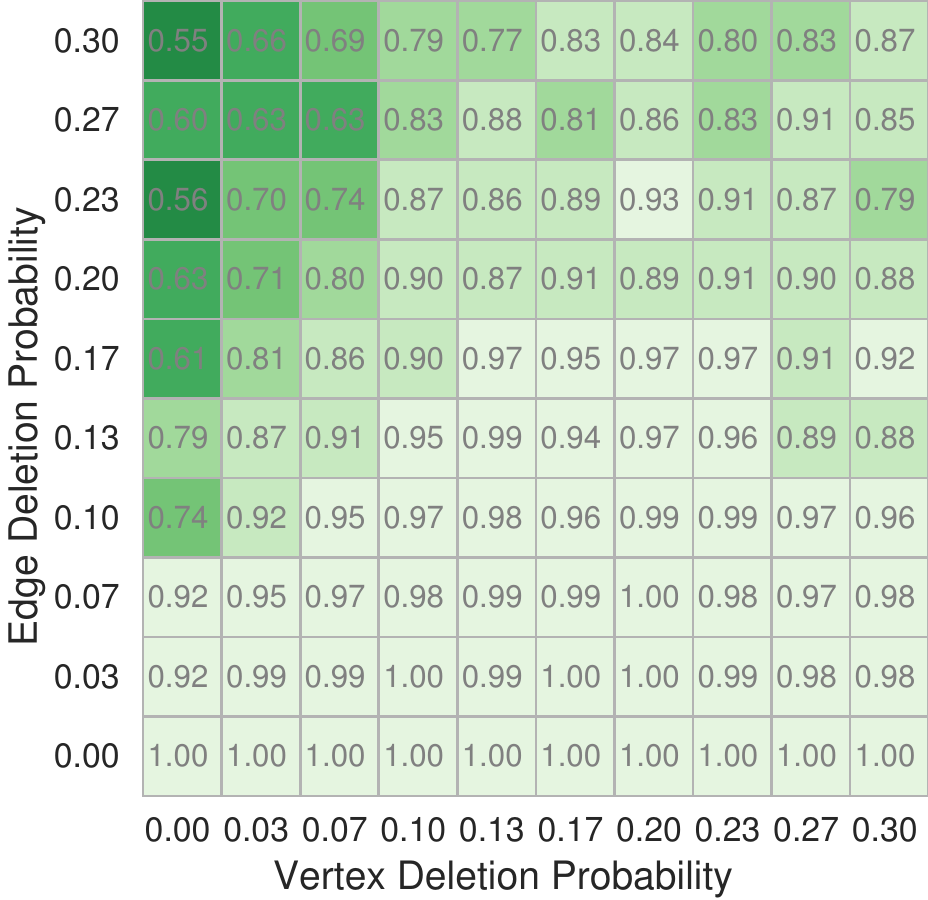}%
	\includegraphics[width=0.33\linewidth]{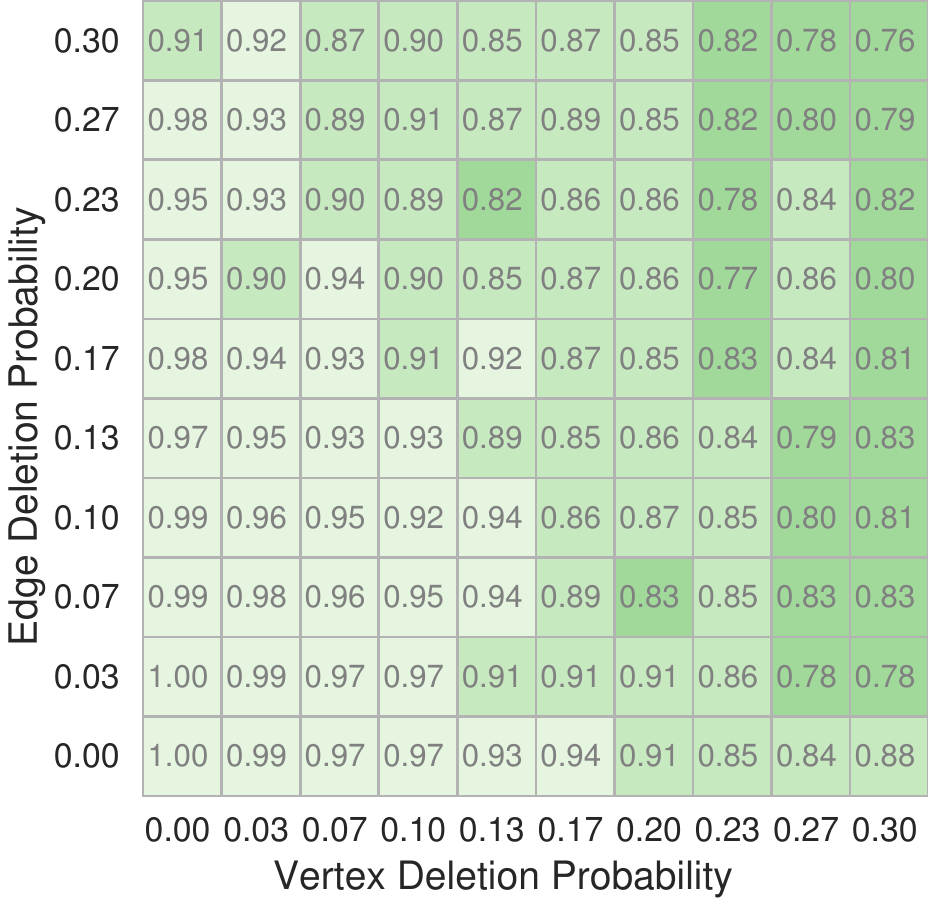}%
	\includegraphics[width=0.33\linewidth]{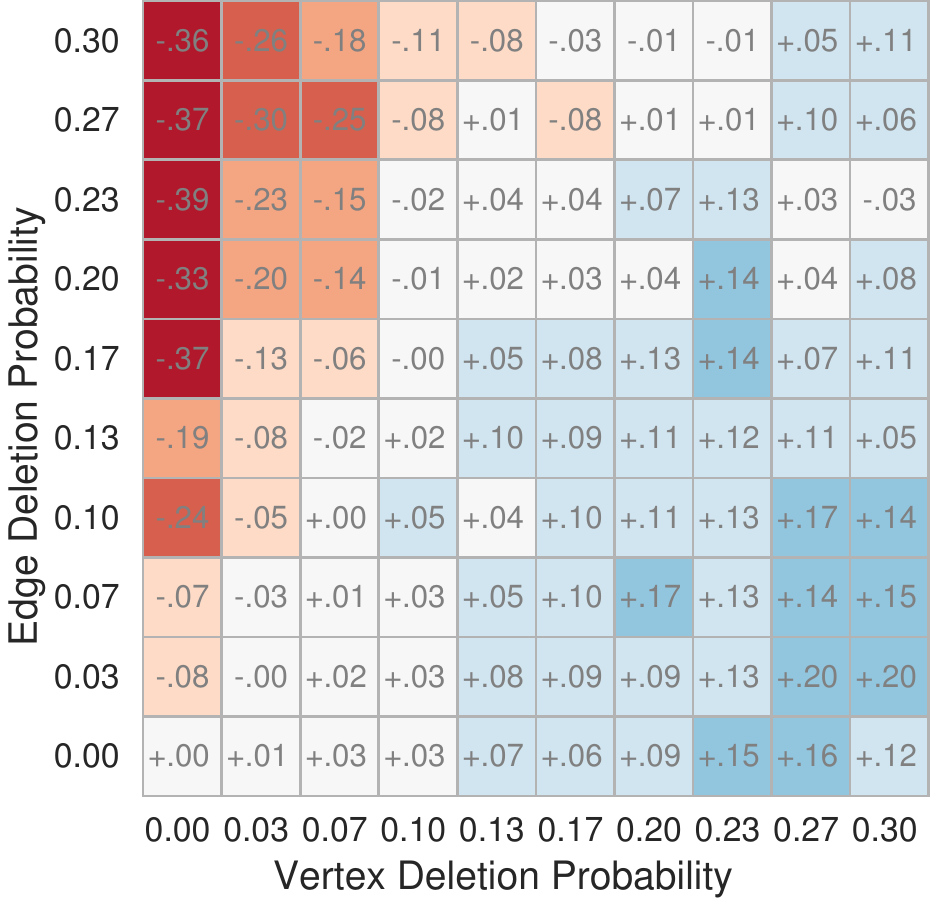}
	
	\caption{At left, the recovery results for our multimodal similarity decomposition (MSD) show excellent performance even in the high vertex deletion regime. In the middle, the best of any pairwise recovery result shows good results in the high edge deletion, but worse results with high vertex deletion. At right, the difference shows blue when MSD outperforms the best of the pairwise alignments by at least 5\% recovery. }
	\label{fig:recovery}
\end{fullwidthfigure}

\begin{figure}
	\parbox{0.5\linewidth}{\makebox[0pt][l]{\footnotesize~~~~~Vertex Del.~0.1, Edge Del.~0.2}}%
	\parbox{0.5\linewidth}{\makebox[0pt][l]{\footnotesize~~~~~Vertex Del.~0.2, Edge Del.~0.1}}
	
	\includegraphics[width=0.5\linewidth]{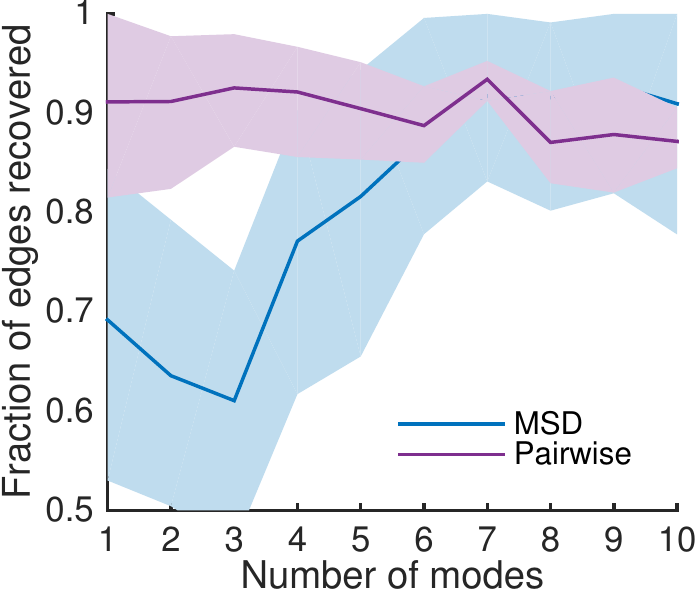}%
	\includegraphics[width=0.5\linewidth]{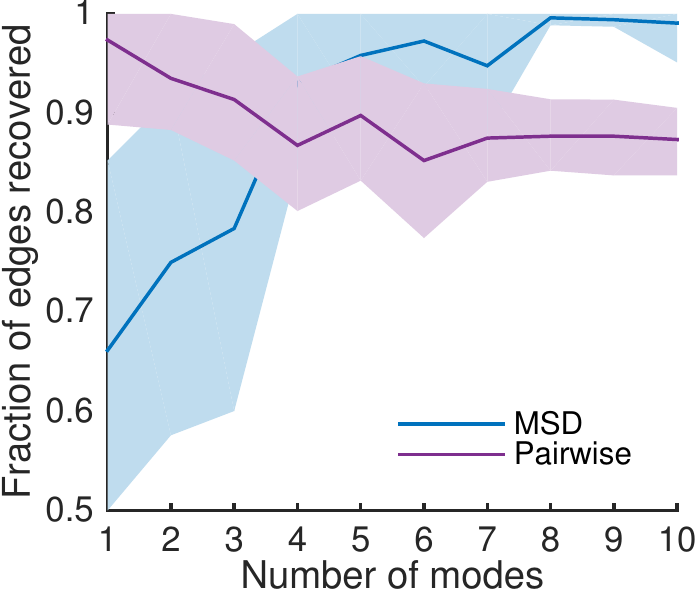}
	
	\caption{At left, we show how the recovery rate changes as we add modes when $p=0.1$ and $q=0.2$; at right we see $p=0.2$ and $q=0.1$. The performance of our multimodal alignment increases as we add modes. The line is the mean over 50 trials and the bands show the 10\% and 90\% percentiles.}
	\label{fig:adding_modes}
\end{figure}

In our first experiment, we want to understand when multimodal alignment results in better alignment compared with using a strong algorithm for the pairwise case. We use our multimodal method with $\alpha = 0.9$, $10$ iterations, and then compute the dense matrix $\YMN$ to use with bipartite matching because this test case is sufficiently small. We compare it against Klau's method~\cite{klau2009-network-alignment} as a pairwise aligner as in Section~\ref{sec:baseline}. Klau's algorithm takes significantly longer than MSD. 

We generate a multimodal network alignment problem with $36$ nodes and $6$ modes as follows. We first create a 12-node Erd\H{o}s-R\'{e}nyi graph with average degree 3, and then join copies into a single 36-node network. The purpose of combining the networks is to put a degree of symmetry into the graph to make the alignment harder. (For more details about the networks construction refer to the code associated with this paper). This is our reference graph.   Next, we generate a mode by randomly deleting vertices with probability $p$ and randomly deleting edges with probability $q/2$. Then we generate two instances of this modal graph with deleted edges again with probability $q/2$. One instance goes to multimodal network $A$ and the other goes to multimodal network $B$. At the end, we have two multimodal networks $A$ and $B$ where each mode shares a number of relationships. 

In figure~\ref{fig:recovery}, we show the fraction of edges aligned using a method over the total number of edges in the networks  
as we vary $p$ and $q$ for both MSD and the best of pairwise alignments (Section~\ref{sec:baseline}).  We do not use node-based recovery due to many nodes of degree one, which cannot be resolved.
The results are the mean over $50$ trials, and they show a large regime where multimodal alignment is superior to any pairwise alignment. Specifically, when the vertex deletion probability $p$ is large, the multimodal alignment is the only method to accurately align the networks. When edge deletion is high, smashing the networks effectively reconstitutes the original network and we see the superior performance of Klau's pairwise method. 

Next, in Figure~\ref{fig:adding_modes}, we show how the behavior of the methods change as we vary the number of modes of the multimodal network. As expected, the performance of our method increases with additional modes, whereas the performance of the pairwise method is consistent (or even slightly decreasing). 

\section{De-anonymizing transportation datasets}
\label{sec:airports}
In this section we present a case-study with de-anonymizing a publicly released dataset of airlines and airports, where we treat each airline as a mode. We show in this section that only our proposed multimodal alignment method can fully de-anonymize the network and align each edge. We run these experiments with $\alpha = 0.9$, $10$ and illustrate how various low-rank matching procedures perform. The anonymized network is the European air network from ref.~\cite{Nicosia-2015-multiplex}, which was released with anonymized airport identifiers but with all airline identifiers. The data originally came from the OpenFlights repository (\url{http://openflights.org}). We wanted to take the original data and use it to restore identifiers to the anonymized data. We selected the May 2013 release based on the publication dates.  These networks have $594$ airports and $175$ airlines and $6468$ edges, respectively. Our goal is to align the multimodal network such that each edge is matched. 

\subsection{Performance at de-anonymization}
We consider an alignment to de-anonymize the network if it is able to overlap all 6468 edges through the matched vertex identifiers. In Table~\ref{tab:airports}, we show the overlap size for our multimodal methods and a number of different low-rank matching techniques from Section~\ref{sec:matching}.  This time we are not able to run the full bipartite matching unless we use a computer with $512 GB$ of RAM and so we don't report those results. The best result from the baseline pairwise alignment methods miss 70 edges. All of our low-rank matching approximations achieve the full overlap score.

\subsection{Which modes help multimodal alignment?}
The final experiment seeks to understand \emph{which} modes have the highest impact on the de-anonymization performance. In many applications, collecting additional data incurs a cost and this experiment is designed to provide insight as far as what type of multimodal data would be most helpful to collect. Our goal is to use only a subset of the $175$ modes to produce an alignment, and then compute the overlap that results from using this alignment on all $175$ modes. We sort the modes based on a number of graph theoretic measures to produce interesting subsets. The per-mode measures are: edge count, unique vertex count, average degree, triangle count, density. Figure~\ref{fig:one-mode-by-mode} shows the results for these measures along with a few random orderings of the modes. The goal is to align the highest fraction of edges using the fewest modes. Most of the graph theoretic measures have similar performance, which suggests that any could be a proxy for which data to collect. Density is a notable outlier as some of the modes consist of a single edge, resulting in a high density, but little information about the global alignment.

\begin{table}[t]
	\caption{The multimodal airport networks have 6468 edges. All the multimodal methods are able to de-anonymize this network. The best pairwise results come from smashing the multimodal structure and aligning the resulting networks, which misses 70 edges compared to the multimodal alignments and takes more time. } 
	\label{tab:airports}
	\footnotesize
	\noindent\begin{tabularx}{\linewidth}{llX}	
		\toprule
		Type & Method & Overlap  \\
		\midrule
		Multimodal & MSD \famp\ Max Weight $1/k$ & $6468$ \\
		Multimodal & MSD \famp\ Union $1/k$ & $6468$ \\
		Multimodal & MSD \famp\ Max Overlap & $6468$ \\ 
		\midrule 
		Pairwise & smashed				& $6398$ \\
		Pairwise & best mode			& $3127$ \\
		\bottomrule
	\end{tabularx}
\end{table}

\begin{figure}[t]
	\centering
		\vspace*{-0.5\baselineskip}
	\includegraphics[width=0.7\linewidth]{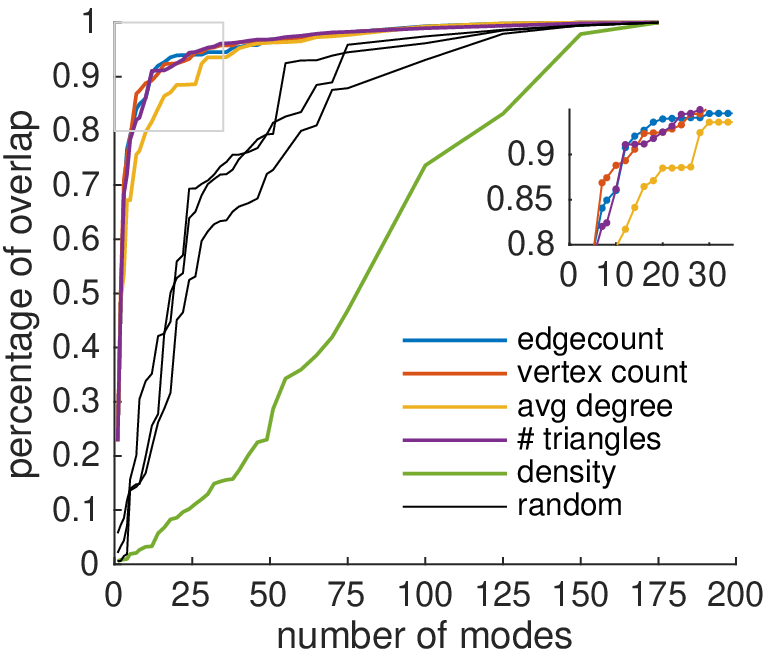}
	\vspace{-1em}	
	\caption{As we use more modes to produce an alignment for the  airport data, the performance increases. This experiment shows that using modes that have many edges (edge count), touch many unique vertices (vertex count), have a high average degree (avg degree), or have high triangle count (\# triangles) all outperform random selections. Alternatively, density does not work as discussed in the main text.}
	\label{fig:one-mode-by-mode}	
	\vspace{-\baselineskip}
\end{figure}

\vspace{-1ex}
\section{Conclusion \famp\ Future Work}

This paper demonstrates an advantage to using multimodal features of data: it makes alignment problems easier. Here, we compared our multimodal network decomposition (MSD) against a carefully engineered method in pairwise alignment (Klau's). We also did all of the pairwise experiments reported here with the IsoRank method instead of Klau's and the results were uniformly worse as far as the alignment quality. However, when we use the multimodal extension to IsoRank we have proposed, accurate alignments are easy to obtain. For instance, note that in Figure~\ref{fig:adding_modes}, MSD used the additional modes to improve the alignment, whereas the pairwise method did not.

Although we have not captured runtimes precisely in this manuscript, let us note that Klau's method takes a great deal more computational effort than our approximation, and results in a extremely high quality pairwise solution for any mode or the smashed networks. To give some sense of the difference in effort required, the results on the synthetic experiment for Figure~\ref{fig:recovery} took \emph{hours} to compute with Klau's method whereas MSD and all the matching took a few minutes. 

In future work, we plan to explore using this type of multimodal alignment in the context of protein and gene relationships. We believe that using this new methodology -- combined with particular domain specific adaptations -- will result in new biological insights. 

\section*{Acknowledgments}
Supported by DARPA SIMPLEX and NSF award CCF-1149756, IIS-1422918, IIS-1546488, CCF-093937 and the Sloan Foundation.

\newpage
\begin{fullwidth}
\bibliographystyle{dgleich-bib}
\scriptsize
\bibliography{99-refs}

\end{fullwidth}

%
%

\end{document}